\title{Higher order epistasis and fitness peaks}
  \author{Kristina Crona and Mengming Luo}
\theoremstyle{plain}
\newtheorem{theorem}{Theorem}[section]
\theoremstyle{definition}
\newtheorem{definition}[theorem]{Definition}
\newtheorem{example}[theorem]{Example}
\begin{document}

\maketitle

\begin{abstract}
We  show that  higher order epistasis 
has a substantial impact on evolutionary dynamics
by analyzing peaks in the fitness landscapes.
There are 193,270,310 fitness graphs 
for 4-locus systems, i.e., directed acyclic graphs on
4-cubes. The graphs can be partitioned into 511,863 isomorphism classes.
We identify all fitness graphs with 6 or more peaks. 
81 percent of them imply 4-way epistasis, whereas
9 percent of all 4-locus fitness graphs imply 4-way epistasis.
Fitness graphs are useful in that they
reflect the entire collection of fitness landscapes,
rather than particular models.
Our results depend on a characterization of fitness 
graphs that imply $n$-way epistasis. The characterization
is expressed in terms of a partition property that can 
be derived from Hall's marriage theorem for bipartite graphs.
A similar partition condition holds for any partial order.
The result answers an open problem posed at a 
conference on interactions between algebra
 and the sciences at the Max Planck institute.
\end{abstract}

\section{introduction}
Recent empirical studies have detected higher order epistasis
in several  biological systems \citep{sh, wlw}. However, the impact of higher
order epistasis has not been carefully investigated.
The purpose of this study is to relate higher order epistasis
and peaks in fitness landscapes. 
The genotypes for a biallellic 2-locus system are denoted
\[
00, 10, 01, 11.
\]
We define a fitness landscape as
a function $w: \{0,1 \}^n \mapsto \mathbb{R}$,
which assigns a fitness value to
each genotype. The fitness of the genotype $g$ is denoted $w_g$.

Epistasis is measured by the expression
\[
w_{00} +w_{11}-w_{01} -w_{10}.
\]

The genotypes for a biallellic 3-locus system are denoted
\[
000, 100, 010, 001, 110, 101, 011, 111,
\]
and the (total) 3-way epistasis is measured by 
\[
u_{111} = w_{000} + w_{011} + w_{101} + w_{110}  - w_{001} - w_{010} - w_{100} -w_{111}.
\]
We refer to $000, 110, 101, 011$ as even genotypes, and
$100, 010, 001, 111$ as odd genotypes. "Even" and "odd" refer to the 
number of 1's in the genotype labels.
The definition of $n$-way epistasis is analogous.

Our analysis of higher order epistasis and peaks depends on fitness graphs.
The vertices of a fitness graph are labeled by the $2^n$ genotypes in the $n$-locus system.
For each pair of mutational neighbors, i.e., genotypes that differ at one locus only,
an arrow points toward the genotype of higher fitness (Figure 1).
In mathematical terms, a fitness graph is a directed acyclic graph on a hypercube, or a cube orientation.
A peak in a fitness landscape is a genotype of higher fitness than all of its mutational neighbors.
By abuse of notation, vertices in fitness graphs are referred to as peaks if they correspond to peaks
in the fitness landscapes.
Fitness graphs were originally used for analyzing empirical data \citep[e.g.]{dpk, fkd}, and
in more recent time for theoretical results \citep{cgg, wdo,cgb}, 
see also the Discussion section. Fitness graphs are of interest since
they capture important aspects of evolutionary dynamics.


\begin{figure}
\begin{tikzpicture}
[very thick, color=black,->,scale=1.8]
\node (n0) at (0,0) {000};
\node  [color=red] (n1) at (1.2,1) {\bf 001};
\node  [color=red] (n2) at (0,1) {\bf 010};
\node (n3) at (1.2,2) {011};
\node  [color=red] (n4) at (-1.2,1) {\bf 100};
\node (n5) at (0,2) {101};
\node (n6) at (-1.2,2) {110};
\node  [color=red]  (n7) at (0,3) {\bf 111};
\draw (n0) -- (n1);
\draw (n0) -- (n2);
\draw (n0) -- (n4);
\draw (n3) -- (n1);
\draw (n3) -- (n2);
\draw (n3) -- (n7);
\draw (n5) -- (n1);
\draw (n5) -- (n4);
\draw (n5) -- (n7);
\draw (n6) -- (n2);
\draw (n6) -- (n4);
\draw (n6) -- (n7);
\end{tikzpicture}
\caption{The fitness graph has 4 peaks. The  graph implies 3-way epistasis.}
\end{figure}
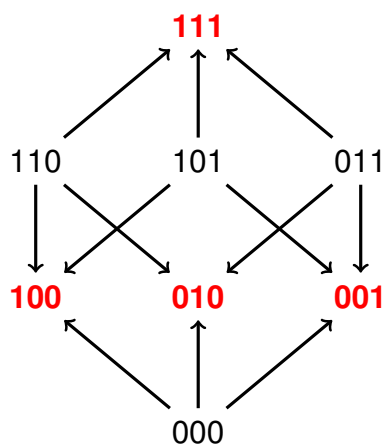

The topic of higher order epistasis and peaks was considered in \citet{cgg}, mainly for 3-locus systems.
There are 1862 fitness graphs for 3-locus systems, partitioned into 54 isomorphism classes.
20 out of the 54 graphs imply 3-way epistasis. The relation between higher order epistasis and 
peaks can be summarized as follows \citep{cgg}.
\begin{itemize}
\item[(i)]
The mean number of peaks for the graphs implying 3-order epistasis is 1.8;
\item[(ii)]
the mean number of peaks for the remaining graphs is 1.5,
\item[(iii)]
graphs with the maximal number of peaks imply 3-way epistasis.
\end{itemize}

However,  there are 193,270,310 fitness graphs for 4-locus systems (see the Supplementary Information)
and some theory is necessary for determining if a fitness graph implies $4$-way epistasis.
More generally, it is of interest to determine whether or not a given partial order 
for an $n$-locus system implies $n$-way epistasis. 
Here we develop the necessary theory, 
and use the results for a systematic investigation of higher 
order epistasis and peaks for 4-locus systems.



\section{Results}
As mentioned in the introduction, 3-way epistasis is measured by 
\[
u_{111} = w_{000} + w_{011} + w_{101} + w_{110}  - w_{001} - w_{010} - w_{100} -w_{111}.
\]
Notice that the coefficients for even genotypes are positive, and for odd genotypes negative,
where even and odd are as in the introduction.
For a given $n$-locus system it is useful to consider $k$-locus subsystems.
For instance, the genotypes with last coordinate zero
\[
000, 100, 010, 110,
\]
constitute a subsystem.
The conditional epistasis relevant for the 2-locus subsystem is defined by the equation
\[
w_{000}+w_{110}-w_{100}-w_{010}.
\]
There are in total six 2-locus subsystems (corresponding to the sides of the cube),
and each subsystem is defined by fixing one of the coordinates
at 0 or 1.

\begin{definition}
An $n$-locus system has $k$-way epistasis for $2 \leq  k \leq n$ if it has 
conditional $k$-way epistasis for at least one $k$-locus subsystem.
\end{definition}
Notice  that similar concepts in the  literature may disagree with this definition.
It is easy to verify that if a system has no $k$-way epistasis,
then it does not have $k+1$ epistasis. 
Fitness landscapes fall naturally into three categories:
\begin{itemize}
\item[(i)] additive landscapes;
\item[(ii)] landscapes with pairwise interactions, but not higher order epistasis,
\item[(iii)] landscapes with higher order epistasis.
\end{itemize}
If one wants to be more precise, one can say that a landscape has $k$-way epistasis
but not higher order interactions.

Some rank orders imply 3-way epistasis, in the sense that every fitness landscape
that satisfy the rank order has 3-way epistasis.
For instance, it is easily seen that
$u_{111}>0$ for the order 
\[
w_{000} >w_{011} > w_{101} > w_{110}  >w_{001} > w_{010}>w_{100} > w_{111},
\]
since the fitness for every single even genotype exceeds the fitness of the odd genotypes.
A complete characterization of rank orders that imply $n$-way epistasis is given in \cite{cgg}.
Here we are interested in understanding when partial orders, in particular fitness graphs,
imply $n$-way epistasis.
Fitness graph are of special interest since they capture important
aspects of the evolutionary dynamics. We illustrate  our method for the fitness graph in Figure 1. 
The 8 vertices of the graph can be partition into 4 pairs
\[
(000, 100),  \, (110,010),  \, (101, 001),  \, (011, 111)
\]
such that the even node in each pair has the lowest fitness.
It is easy to verify that $u_{111}<0$, i.e., the fitness graph implies 3-way epistasis.

More generally,  it is straight forward to verify 
that any fitness graph that satisfies a similar partition 
condition implies higher order epistasis  \citep{cgg}.
We will show that the partition property is a necessary condition for
the fitness graph to imply epistasis.
The proof relies on the following result for bipartite graphs.

\noindent
{\bf{Halls Marriage Theorem}} \citep{hall}.
Given a bipartite graph $G=(V, E)$ with bipartition $A \cup B $ $(V=A \cup B)$,
the graph $G$ has a matching of size $|A|$ if and only if for every $S \subseteq A$,
we have $|N(S)| \geq |S|$, where 
\[
N(S)= \{  b \in B, \text{ such that there exists } a \in S \text{ with } (a,b) \in E \}.
\].

Before we give a proof, 
we demonstrate the main idea of our argument in an example.

\begin{example}
Assume that 
\[ 
1111 \succ 1110   \text{ and }  1111 \succ 1101, 
\]
and that $\prec$ does not constrain 
$1110$  and $1101$ in any other way.
Then we can assign fitnesses such that 
\[
w_{1111}  - w_{1110}-w_{1101}
\]
is an arbitrarily large negative number. By assigning fitnesses to
the remaining  variables we can define a fitness landscape which respects $\prec$,
such that $u_{1111} <0$.
\end{example}

A  similar argument will be used in the proof.

\begin{theorem}
Let   $o_i$ denote the odd genotypes and $e_i$ the even genotypes.
A partial order $\prec$ of the genotypes with respect to fitness implies (positive) $n$-way epistasis if and only if
there exist a partition $\{  o_j, e_j  \}$ of all genotypes such that
\[
o_j  \prec  e_j
\] for all $j$.
\end{theorem}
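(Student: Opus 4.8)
The plan is to prove the two implications separately, obtaining sufficiency by a one-line summation and necessity (in contrapositive form) from Hall's marriage theorem. Throughout, recall that $n$-way epistasis is the signed sum $u=\sum_{e\ \mathrm{even}}w_{e}-\sum_{o\ \mathrm{odd}}w_{o}$, carrying coefficient $+1$ on even genotypes and $-1$ on odd ones. For the \emph{if} direction, suppose a partition $\{o_{j},e_{j}\}$ with $o_{j}\prec e_{j}$ for all $j$ exists. Any landscape $w$ respecting $\prec$ then satisfies $w_{o_{j}}<w_{e_{j}}$ for every pair, and since the $2^{n-1}$ pairs exhaust the genotypes, summing these inequalities yields $\sum_{j}w_{o_{j}}<\sum_{j}w_{e_{j}}$, i.e. $u>0$. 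Hence every landscape respecting $\prec$ has positive $n$-way epistasis.

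For the \emph{only if} direction I would argue by contraposition, showing that if no such partition exists then some landscape respecting $\prec$ has $u\le 0$. First I would encode the partition problem as a bipartite matching: let $G$ have parts $A$ (the odd genotypes) and $B$ (the even genotypes), with an edge joining $o\in A$ to $e\in B$ exactly when $o\prec e$. A partition as in the statement is precisely a matching that saturates $A$, and since $|A|=|B|=2^{n-1}$ such a matching is perfect. Assuming no partition exists, Hall's theorem supplies a set $S\subseteq A$ with $|N(S)|<|S|$.

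The heart of the argument is to convert this deficiency into a fitness landscape. Let $U=\{g:o\preceq g\ \text{for some}\ o\in S\}$ be the up-set generated by $S$. Its even part is exactly $N(S)$ while its odd part contains $S$, so $|U\cap B|=|N(S)|<|S|\le|U\cap A|$. I would then pick any base landscape $w^{0}$ that strictly respects $\prec$ (for instance from a linear extension of $\prec$) and set $w=w^{0}+M\,\mathbf 1_{U}$. A short case analysis on the relations of $\prec$ shows that, because $U$ is an up-set, $w$ still respects $\prec$ for every $M\ge 0$: the only potentially bad case, a relation running out of $U$, is excluded by upward closure. By linearity $u(w)=u(w^{0})+M\bigl(|U\cap B|-|U\cap A|\bigr)$, and the coefficient of $M$ is negative, so $u(w)<0$ once $M$ is large enough. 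Thus $\prec$ does not force positive $n$-way epistasis, completing the contrapositive.

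I expect the main obstacle to be this final construction rather than the matching reduction. The subtlety is that the deficient set $S$ returned by Hall need not be upward closed, and a raw boost on $S$ alone could violate $\prec$; passing to the generated up-set $U$ repairs this while, crucially, leaving the even count unchanged via the identity $U\cap B=N(S)$. Verifying that perturbing by $M\mathbf 1_{U}$ on an up-set preserves every order relation, and that this identity holds, is the delicate point. The summation in the \emph{if} direction and the translation into a bipartite matching are routine by comparison.
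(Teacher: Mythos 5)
Your proof is correct and follows essentially the same route as the paper: sufficiency by summing the pairwise inequalities over the partition, and necessity via Hall's marriage theorem on the odd/even bipartite graph, converting a Hall-deficient set $S$ into a landscape respecting $\prec$ with $u\le 0$. Your up-set perturbation $w=w^{0}+M\,\mathbf{1}_{U}$ (with the observation that $U\cap B=N(S)$ while $U\cap A\supseteq S$) carefully carries out a construction the paper only sketches, but the underlying argument is the same.
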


\begin{proof}
It is straight forward to verify that the condition is sufficient \citep{cgg}.
Consider all  fitness landscapes that respect a partial order $\prec $ of the genotypes.
Let $V_o$ denote the genotypes with odd subindices and $V_e$ the genotypes with
even subindices. Consider the graph with vertices $V_e \cup V_o$, where $(o, e)$ is an edge if $o \prec e$.

Let $N$ be defined as above. Suppose that there exists a set of nodes $S_o \subseteq V_o$ 
such that $|N(S_o)| < |(S_o)| $. Then we can assign fitnesses to the set of nodes in $S_o \cup N( S_o )$, such
that  
\[
\sum_{ e_i \in N(S_o)}  w(e_i) - \sum_{o_j \in S_o} w(o_j)
\]
is an arbitrarily large  negative number. It follows that one can define
a fitness landscape which respects $\prec$ and does not have (positive) $n$-way epistasis.
Consequently, if the partial order $\prec$ implies $n$-way epistasis, then there does not exists any set $S_o$ as described.
By Hall's theorem, a matching exists and therefore the desired partition.
\end{proof}

Theorem 2.3 answers an open problem posed at a conference on interactions between algebra and the sciences
at the Max Planck institute in Leipzig, May 2017 (see also the acknowledgement).

The 193,270,310 fitness graphs for 4-locus systems can be partitioned into
511,863 isomorphism classes (see the Supplementary Information).
In order to analyze the relation between peaks and higher order
epistasis, we investigated all  graphs with 6 or more peaks.
The maximal number of peaks for a 4-locus system is 8 \citep{h}, see Figure 2.
\begin{figure}
\begin{tikzpicture}
[very thick, color=black,->, scale=1.2]
\node (n0) at (0, -4) {0000};
\node  [color=red] (n1) at (4.0, -2.0) {\bf0001};
\node  [color=red] (n2) at (1.3333333333333335, -2.0) {\bf 0010};
\node (n3) at (5.0, 0) {0011};
\node [color=red] (n4) at (-1.333333333333333, -2.0) {\bf 0100};
\node (n5) at (3.0, 0) {0101};
\node (n6) at (1.0, 0) {0110};
\node  [color=red] (n7) at (4.0, 2.0) {\bf 0111};
\node  [color=red] (n8) at (-3.9999999999999996, -2.0) {\bf 1000};
\node (n9) at (-1.0, 0) {1001};
\node (n10) at (-3.0, 0) {1010};
\node  [color=red] (n11) at (1.3333333333333335, 2.0) {\bf 1011};
\node (n12) at (-5.0, 0) {1100};
\node  [color=red] (n13) at (-1.333333333333333, 2.0) {\bf 1101};
\node  [color=red] (n14) at (-3.9999999999999996, 2.0) {\bf 1110};
\node (n15) at (0, 4) {1111};
\draw (n0) -- (n1);
\draw (n0) -- (n2);
\draw (n0) -- (n4);
\draw (n0) -- (n8);
\draw (n3) -- (n1);
\draw (n3) -- (n2);
\draw (n3) -- (n7);
\draw (n3) -- (n11);
\draw (n5) -- (n1);
\draw (n5) -- (n4);
\draw (n5) -- (n7);
\draw (n5) -- (n13);
\draw (n6) -- (n2);
\draw (n6) -- (n4);
\draw (n6) -- (n7);
\draw (n6) -- (n14);
\draw (n9) -- (n1);
\draw (n9) -- (n8);
\draw (n9) -- (n11);
\draw (n9) -- (n13);
\draw (n10) -- (n2);
\draw (n10) -- (n8);
\draw (n10) -- (n11);
\draw (n10) -- (n14);
\draw (n12) -- (n4);
\draw (n12) -- (n8);
\draw (n12) -- (n13);
\draw (n12) -- (n14);
\draw (n15) -- (n7);
\draw (n15) -- (n11);
\draw (n15) -- (n13);
\draw (n15) -- (n14);
\end{tikzpicture}
\caption{The fitness graph has 8 peaks. The graph implies $4$-way epistasis. }
\end{figure}
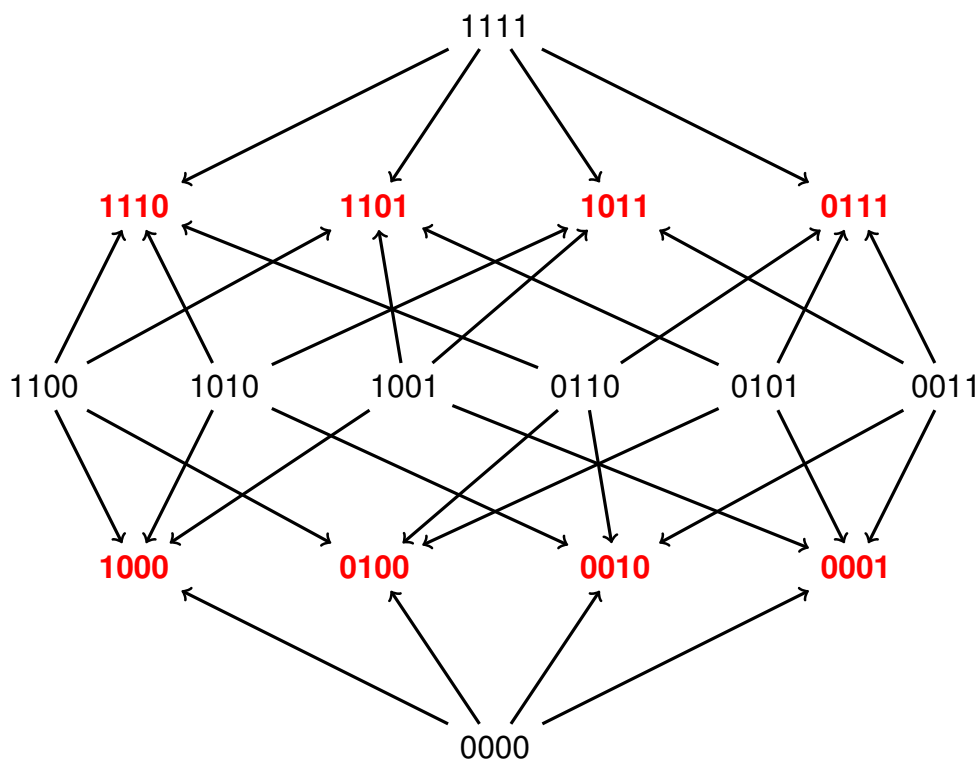

Table 1 summarizes the interactions for 4-locus systems with 6-8 peaks.
There are in total 64 graphs in the category,
52 out of them imply $4$-way epistasis.
The sole graph that does not imply higher order epistasis has 6 peaks.
\begin{table}
\begin{tabular}{|c|c|c | c | c |}
$ \#$ Peaks & $\#$ Graphs & 2-way  & $3$-way & 4-way\\
\hline \hline
8  & 1 &  0 & 0 & 1 \\ 
7  & 4 & 0 & 1 & 3\\
6  & 59 & 1  & 10 & 48\\
\hline
6--8 & 64 & 1& 11 & 52 \\
\hline
\end{tabular}
\caption{The table shows how gene interactions and the number of peaks relate for 4-locus systems. 
The rightmost column shows the number of graphs that imply  4-way epistasis. The
adjacent column shows the number of graphs that imply 3-way espistasis, but
not higher order epistasis. The middle columns shows the number of graphs that are compatible
with pairwise epistasis only.}
\end{table}

Table 2 summarizes the relation between $4$-way epistasis 
and peaks. The information for graphs with
$6-8$ peaks  is exact. The remaining statistics 
depend on a sample of 200 randomly chosen graphs 
among the $511, 863$  isomorphism classes.

\begin{table}
\begin{tabular}{|c| c |}
$ \#$ Peaks & 4-way interactions \\
\hline \hline 
$6-8$  & 81 \% \\ 
5  & 50  \% \\ 
4  & 15  \%\\
3  & 10  \% \\
2  & 10  \%\\
1  & 5  \% \\
\hline
All graphs & 9 \% \\
\hline 
\end{tabular}
\caption{The table shows how the proportion of graphs that imply 4-way epistasis depends on the number of peaks.}
\end{table}

The maximal number of peaks for an $n$-locus system is $2^{n-1}$ \citep{h}.
However, in the absence of higher order epistasis, the maximal number of peaks is 3 (rather than 4 in general) for
3-locus systems, and 6 (rather than 8 in general) for 4-locus systems.
A landscape compatible with the graph in Figure 2 is defined as
\[
  w_g = 1+   0.1 d  - { d  \choose 2}  0.06   \quad  \text{ where}  \quad   d= \sum g_i  .
\]
Notice that the fitness of each genotype is determined by the number or $1$'s. In particular,
\[
w_{0000}=1, \, \,
w_{1000}=1.1, \, \,
w_{1100}=1.14, \, \,
w_{1110}=1.12,  \, \,
w_{1111}=1.04.
\]

A similar construction works for any number of loci (see the Supplementary Information).
One concludes that the  maximal number of peaks for landscapes with  only pairwise interactions is
 \[
 \geq  { n  \choose \,   \lfloor  \frac{n}{2} \rfloor  \, } .
  \]
The bound is exact for $n=2, 3, 4$. 
Whether or not the bound is exact in general is an open problem.
Asymptotically,
 \[ 
 { n  \choose \,   \lfloor{  \frac{n}{2} \rfloor}  \, }   \mapsto \frac{4^n}{ \sqrt{\pi n} }
 \]
(see also the Supplementary Information).

One can ask if the landscapes with extreme peak density
are likely to occur in nature, or rather should be considered
theoretical constructions. In fact, the unique fitness graph 
with 6 peaks that is compatible with pairwise interactions 
is biologically meaningful (Figure 3).  The graph is compatible with
stabilizing selection. Stabilizing selection occurs for traits where 
an intermediate value is preferable as compared to the
extremes in the population.
Notice that the six genotypes
\[
1100, 1010, 1001, 0110, 0101, 0011,
\]
have the same number of $0$'s and $1$'s,
and any mutation that makes the distribution more equal is
beneficial.

Moreover, the fitness graph with eight peaks (Figure 2)
is compatible with egg box models \citep{bmh}.
Egg box models agree with some empirical  data sets,
although such examples are probably not common.

\bigskip

\section{Discussion}
We have studied the relation between higher order epistasis
and peaks in fitness landscapes.
Fitness graphs (directed acyclic hypercube graphs) reveal all possible
constellations of peaks in the landscapes.
We provide a characterization of fitness graphs 
that imply higher order epistasis. A similar result holds for partial orders.
Our proof depends on Hall's marriage theorem for bipartite graphs. 
The theoretical results are of independent interest. 

We applied the characterization extensively in  a systematic study of fitness graphs
for 4-locus biallelic systems, in total 193,270,310 fitness graphs.
The graphs can be partitioned into 511, 863 isomorphism classes.
We identified all graphs with 6 or more peaks and found that
81 percent of them imply 4-way epistasis. In contrast, 9 percent of all fitness graphs 
imply 4-way epistasis. Our result agrees well with 
properties of 3-locus systems \citep{cgg}.

Fitness graphs allow for analyzing fitness landscapes in general, 
rather than focusing on a particular parametric model,
such as the NK model or the Mount Fuji model \citep[e.g.]{dk}.
Results on fitness graphs are in a sense universal, although one should keep in mind
that some graphs occur frequently in nature, whereas other graphs 
are rare. 

Fitness graphs have been applied for relating global properties,
such as the number of peaks and mutational trajectories, and local
properties of fitness landscapes \cite{cgb}.
The capacity of fitness graphs to detect higher order epistasis
was demonstrated in \cite{wdo}. The authors observed that one can deduce 3-way epistasis from
particular graphs. A complete description of fitness graphs that imply 3-way epistasis was provided in \cite{cgg}.
For related results, see also \citep{ptk,wwc}.
The graphs have potential for other aspects of higher order epistasis as well,
including the relation between higher order epistasis and (accessible) mutational trajectories.

One would expect that higher order epistasis implies that adaptation is more difficult
according to almost any measure. In any case, more systematic studies would be valuable.
An obvious limitation of fitness graphs is that they 
encode pairwise fitness comparisons only, 
and not for instance the curvature of fitness landscapes.
Other methods are necessary as well for analyzing higher order epistasis.

The utility of fitness landscapes have been discussed  in recent years \citep[e.g.][]{hartl}
and sometimes questioned altogether \citep[e.g.][]{kaplan}.
Fitness graphs and other efficient representations of  fitness landscapes
may ameliorate some of the issues addressed. 
In any case, fitness graphs are intuitive and at the same time well suited for
developing theory.

\begin{figure}
\begin{tikzpicture}
[very thick,black,->,outer sep=1mm, scale=1.2]
\node (n0) at (0, -4.0) {0000};
\node (n1) at (4.0, -2.0) {0001};
\node (n2) at (1.3333333333333335, -2.0) {0010};
\node [color=red] (n3) at (5.0, 0) {\bf 0011};
\node (n4) at (-1.333333333333333, -2.0) {0100};
\node  [color=red] (n5) at (3.0, 0) {\bf 0101};
\node [color=red] (n6) at (1.0, 0) {\bf 0110};
\node (n7) at (4.0, 2.0) {0111};
\node (n8) at (-3.9999999999999996, -2.0) {1000};
\node [color=red] (n9) at (-1.0, 0) {\bf 1001};
\node [color=red] (n10) at (-3.0, 0) {\bf1010};
\node (n11) at (1.3333333333333335, 2.0) {1011};
\node [color=red](n12) at (-5.0, 0) {\bf 1100};
\node (n13) at (-1.333333333333333, 2.0) {1101};
\node (n14) at (-3.9999999999999996, 2.0) {1110};
\node (n15) at (0, 4.0) {1111};
\draw (n0) -- (n1);
\draw (n0) -- (n2);
\draw (n0) -- (n4);
\draw (n0) -- (n8);
\draw (n1) -- (n3);
\draw (n1) -- (n5);
\draw (n1) -- (n9);
\draw (n2) -- (n3);
\draw (n2) -- (n6);
\draw (n2) -- (n10);
\draw (n4) -- (n5);
\draw (n4) -- (n6);
\draw (n4) -- (n12);
\draw (n7) -- (n3);
\draw (n7) -- (n5);
\draw (n7) -- (n6);
\draw (n8) -- (n9);
\draw (n8) -- (n10);
\draw (n8) -- (n12);
\draw (n11) -- (n3);
\draw (n11) -- (n9);
\draw (n11) -- (n10);
\draw (n13) -- (n5);
\draw (n13) -- (n9);
\draw (n13) -- (n12);
\draw (n14) -- (n6);
\draw (n14) -- (n10);
\draw (n14) -- (n12);
\draw (n15) -- (n7);
\draw (n15) -- (n11);
\draw (n15) -- (n13);
\draw (n15) -- (n14);
\end{tikzpicture}
\caption{The fitness graph has 6 peaks. The graph is compatible with pairwise interactions only.}
\end{figure}
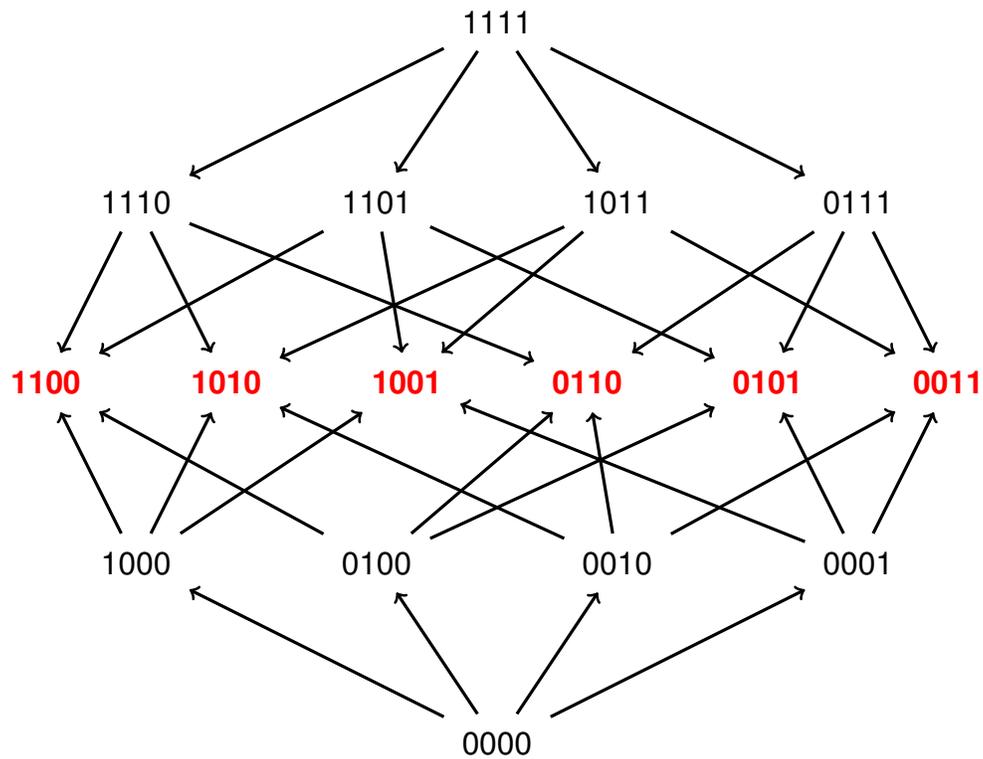

\begin{figure}
\begin{tikzpicture}
[very thick, color=black,->, scale=1.2]
\node (n0) at (0, -4.0) {0000};
\node [color=red] (n1) at (4.0, -2.0) {\bf 0001};
\node [color=red](n2) at (1.3333333333333335, -2.0) {\bf0010};
\node (n3) at (5.0, 0) {0011};
\node [color=red](n4) at (-1.333333333333333, -2.0) {\bf 0100};
\node (n5) at (3.0, 0) {0101};
\node (n6) at (1.0, 0) {0110};
\node (n7) at (4.0, 2.0) {0111};
\node [color=red] (n8) at (-3.9999999999999996, -2.0) {\bf 1000};
\node (n9) at (-1.0, 0) {1001};
\node (n10) at (-3.0, 0) {1010};
\node  [color=red](n11) at (1.3333333333333335, 2.0) {\bf 1011};
\node (n12) at (-5.0, 0) {1100};
\node  [color=red] (n13) at (-1.333333333333333, 2.0) {\bf 1101};
\node  [color=red] (n14) at (-3.9999999999999996, 2.0) {\bf 1110};
\node (n15) at (0, 4.0) {1111};
\draw (n0) -- (n1);
\draw (n0) -- (n2);
\draw (n0) -- (n4);
\draw (n0) -- (n8);
\draw (n3) -- (n1);
\draw (n3) -- (n2);
\draw (n3) -- (n11);
\draw (n5) -- (n1);
\draw (n5) -- (n4);
\draw (n5) -- (n13);
\draw (n6) -- (n2);
\draw (n6) -- (n4);
\draw (n6) -- (n14);
\draw (n7) -- (n3);
\draw (n7) -- (n5);
\draw (n7) -- (n6);
\draw (n7) -- (n15);
\draw (n9) -- (n1);
\draw (n9) -- (n8);
\draw (n9) -- (n11);
\draw (n9) -- (n13);
\draw (n10) -- (n2);
\draw (n10) -- (n8);
\draw (n10) -- (n11);
\draw (n10) -- (n14);
\draw (n12) -- (n4);
\draw (n12) -- (n8);
\draw (n12) -- (n13);
\draw (n12) -- (n14);
\draw (n15) -- (n11);
\draw (n15) -- (n13);
\draw (n15) -- (n14);
\end{tikzpicture}
\caption{The fitness graph has 7 peaks. The  graph does not imply  4-way epistasis.
However, it implies 3-way epistasis.}
\end{figure}
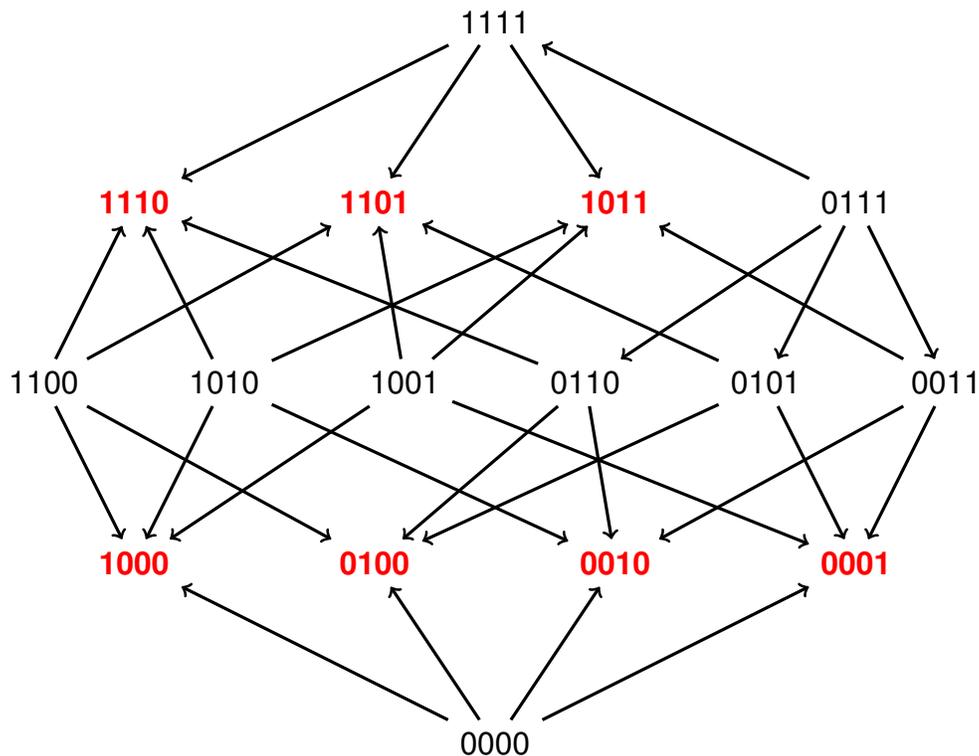

\bigskip
\noindent
{\bf{Acknowledgements.}}
Devin Greene produced all  code for 
the project. The code made it possible to
identify  isomorphism classes of graphs, 
compute the number of peaks and automatically 
produce pictures.

The partition result, here Theorem 2.3,
was proved independently by another group,
and will  also appear in work
 by Caitlin Lienkaemper and co-authors.

\section{Supplementary information}

\subsection*{4-locus fitness graphs}
There are 193,270,310 fitness graphs for 4-locus systems.
This can be verified by considering the chromatic polynomial for
4-cube graphs.
The chromatic polynomial is
\[
\begin{aligned}
& (x-1)x (x^{14}-31 x^{13}+465 x^{12}-4471 x^{11}+30793 x^{10}-160807 x^9 +657229 x^8 \\
&-2137667 x^7  +5564285 x^6-11536667 x^5 +18740317 x^4 -23081607 x^3 \\
&+ 20308039 x^2-11372201x+3040575).
\end{aligned}
\]
Evaluation of the polynomial in $-1$ gives  193,270,310.
By Corollary 1.3 in  \citet{s},
there are 193,270,310 fitness graphs for 4-locus systems.
The graphs can be partitioned into 511,863 isomorphism classes.

The argument for finding the isomorphism classes is similar to
the 3-locus case \citep{cgg}.
We generated a list of 511,863 fitness graphs, one for each isomorphism class,
and identified all graphs with 6 or more peaks. We sampled 
graphs with fewer peaks from the list (summarized in Table 2).

\subsection*{Fitness landscapes with no higher order epistasis}
Consider fitness landscapes of the following type:
\[
  w_g = 1+   k a  - { k  \choose 2}  e   \quad  \text{ where}  \quad   k= \sum g_i  
\]
where $a$ and $e$ are positive numbers. 
Moreover, assume that $a$ and $e$ are chosen so that $w_g>0$ for all $g$.
For $n=4$ we obtain the fitness landscapes discussed in the main text
if $a=0.1$ and $e=0.06$. All  genotypes with $\sum g_i =2$ are peaks. 

The same construction works for any $n$. It is straight forward to verify that $a$ and $e$
can be chosen such that $w_g$ is maximal if $k=\lfloor  \frac{n}{2} \rfloor$.
Such graphs have 
$
{ n  \choose \,   \lfloor  \frac{n}{2} \rfloor  \, } .
$
peaks.
For the category of fitness landscapes with no higher order epistasis,
one concludes that the maximal number of peaks is
\[
\geq { n  \choose \,   \lfloor  \frac{n}{2} \rfloor  \,}.
\]
The bound is exact for $n \leq 4$, as is clear from Table 1 for  $n=4$,
and from \citet{cgg} for $n=3$.
For asymptotic behavior of this expression as described in the main text, see also
\url{http://oeis.org/A000984} and references therein.


\end{document}